\newcommand\LL{{\mathbb L}}
\newcommand\RR{{\mathbb R}}
\newcommand\TT{{\mathbb T}}
\newcommand\ZZ{{\mathbb Z}}
\newcommand{\ri}{\mathrm{i}}
\newcommand{\mc}[1]{{\mathcal #1}}
\newcommand{\bb}[1]{{\mathbb #1}}
\begin{document}

\title*{Superdiffusion of energy in Hamiltonian systems perturbed by a conservative noise}
\author{C\'edric Bernardin}
\institute{C\'edric Bernardin \at Universit\'e de Nice Sophia-Antipolis, Laboratoire J.A. Dieudonn\'e, UMR CNRS 7351, Parc Valrose, 06108 Nice cedex 02- France,
 \email{cbernard@unice.fr}}
%
%
\maketitle

\abstract{We review some recent results on the anomalous diffusion of energy in systems of 1D coupled oscillators and we revisit the role of momentum conservation. }

\section{Introduction}
Transport properties of one-dimensional Hamiltonian systems consisting
of coupled oscillators on a lattice have been the subject of many theoretical and numerical studies, see the review papers~\cite{BLR,D,LLP}. Despite many efforts, our knowledge of the fundamental mechanisms necessary and/or sufficient to have a normal diffusion remains very limited.

Consider a $1$-dimensional chain of oscillators indexed by $x \in \ZZ$, whose formal Hamiltonian is given by
\begin{equation*}
{\mc H} = \sum_{x \in \ZZ} \left[ \frac{p_x^2}{2} +V(r_x) \right],
\end{equation*} 
where $r_x=q_{x+1}-q_x$ is the ``deformation'' of the lattice, $q_x$  being the displacement of the atom $x$ from its equilibrium position and $p_x$ its momentum. The interaction potential $V$ is a smooth positive function growing at infinity fast enough. The energy $e_x$ of atom $x \in \ZZ$ is defined by 
$$e_x = \frac{p_x^2}{2} + V(r_x).$$ 

Our goal is to understand the macroscopic energy diffusion properties for the corresponding Hamiltonian dynamics 
\begin{equation*}
\cfrac{dr_x}{dt} = p_{x+1} -p_x, \quad \cfrac{dp_x}{dt} = V'(r_{x}) -V'(r_{x-1}), \quad x\in \ZZ.
\end{equation*}
Under suitable conditions on $V$, the infinite dynamics is well defined for a large class of initial conditions. 

Apart from the total energy $\sum_x e_x$, observe that the total momentum $ \sum_x p_x$ and the total deformation $\sum_x r_x$ of the lattice are formally conserved. This is a consequence of the following microscopic continuity equations:
\begin{eqnarray}
&&\frac{de_x}{dt} + \nabla [j_{x-1,x}^e]=0, \quad j_{x,x+1}^e= -p_{x+1} V' (r_x),\\
&&\frac{dp_x}{dt} + \nabla[ -V' (r_{x-1})]=0,\\
&&\frac{dr_x}{dt} + \nabla[ -p_x ]=0.
\end{eqnarray}
The function $j_{x,x+1}^e$ is the current of energy going from $x$ to $x+1$. The main open problem (\cite{LP}, \cite{Sz}) concerning the foundation of statistical mechanics based on classical mechanics is precisely to show that the three quantities above are the only quantities which are conserved by the dynamics. In some sense, it means that the dynamics, evolving on the manifold defined by fixing the total energy, the total momentum and the total deformation, is ergodic. Of course, the last sentence does not make sense since we are in infinite volume and $\sum_x e_x, \sum_x p_x$ and $\sum_x r_x$ are typically infinite. Nevertheless, an alternative meaningful definition will be proposed and discussed in Section \ref{sec:ergodicity}.    

Numerical simulations provide a strong evidence of the fact that one dimensional chains of anharmonic oscillators conserving momentum are {\footnote{See however the coupled-rotor model which displays normal behavior (see \cite{LLP}, Section 6.4). This is probably due to the fact that the position space is compact. }} superdiffusive. It shall be noticed that there is no explanation of this, apart from heuristic considerations, and that some models which do not conserve momentum can also display anomalous diffusion of energy (see \cite{GDL}). 

%

An interesting area of current research consists in studying this problem for hybrid models where a stochastic perturbation is superposed to the deterministic evolution. Even if the problem is considerably simplified, several open challenging questions can be addressed for these systems. The first benefit of the introduction of stochasticity in the models is to guarantee the ergodicity that we are not able to show for purely deterministic systems. The added noise must be carefully chosen in order not to destroy the conservation laws we are interested in. In particular, the noise shall conserve energy. But we will consider a noise conserving also some of the other quantities conserved by the underlying Hamiltonian dynamics, e.g. the momentum, the deformation or any linear combination of them. 

The paper is organized as follows. In Section \ref{sec:ergodicity} we discuss the problem of the ergodicity of the infinite dynamics mentioned above and the possible stochastic perturbations we can add to the deterministic dynamics to obtain ergodic dynamics. In Section \ref{sec:harmonic} we review some results obtained in the context of harmonic chains perturbed by a conservative noise and we discuss the case of anharmonic chains in the last section.

\section{Ergodicity}
\label{sec:ergodicity}

Let us first generalize the models introduced above (\cite{BS}). Let $U$ and $V$ be smooth positive potentials growing at infinity fast enough and let ${\mc H}:={\mc H}_{U,V}$ be the Hamiltonian 
$${\mc H}_{U,V} =\sum_{x \in \ZZ} \left[ U(p_x) +V(r_x) \right].$$
The corresponding Hamiltonian dynamics satisfy 
\begin{equation}
\label{eq:Ham-gen}
\frac{dr_x}{dt}= U'(p_{x+1}) -U' (p_{x}), \quad \frac{dp_x}{dt} =V'(r_x) -V'(r_{x-1}), \quad x \in \ZZ.
\end{equation}
The energy of particle $x$ is defined by $e_x = U(p_x) + V(r_x)$. The three formal quantities $\sum_x e_x$, $\sum_x r_x$ and $\sum_x p_x$ are conserved by the dynamics. 
The fundamental question we address in this section is: are they the only ones?
In finite volume, i.e. replacing the lattice $\ZZ$ by a finite box $\Lambda$, this would correspond to the usual notion of ergodicity for Hamiltonian flows with a finite number of degrees of freedom. But since we consider the dynamics in infinite volume the notion of conserved quantity has to be properly defined. The way we follow to attack the problem is to detect the existence of a non-trivial conserved quantity through the existence of a non-trivial invariant state for the infinite dynamics.   

Let $\Omega=(\RR \times \RR)^{\ZZ}$ be the phase space of the dynamics and let us denote a typical configuration by $\omega=(r,p) \in \Omega$. For simplicity we assume that for any $(\beta, \lambda, \lambda' ) \in (0, +\infty) \times \RR \times \RR$, the partition function 
$$Z(\beta, \lambda, \lambda')= \int_{\RR \times \RR} e^{-\beta[U(a) +V(b)] -\lambda b - \lambda' a } da \,  db$$ 
is finite. Let $\mu_{\beta, \lambda,\lambda'}$ be the product Gibbs measures on $\Omega$ defined by 
\begin{equation*}
d\mu_{\beta, \lambda,\lambda'} (\omega) = \prod_{x \in \ZZ} \frac{1}{Z(\beta, \lambda,\lambda')} \exp \left[ - \beta [U(p_x) + V(r_x)] - \lambda r_x - \lambda' p_x \right] dr_x dp_x.
\end{equation*}
We assume that (\ref{eq:Ham-gen}) is well defined for a subset $\Omega_{\beta, \lambda,\lambda'}$ of full measure with respect to $\mu_{\beta, \lambda,\lambda'}$, that the latter is invariant for (\ref{eq:Ham-gen}), and that it is possible to define a strongly continuous semigroup in ${\bb L}^2 (\mu_{\beta,\lambda,\lambda'})$ with formal generator 
 \begin{equation*}
 {\mc A}_{U,V} = \sum_{x \in \ZZ} \left[(U'(p_{x+1}) -U'(p_x)) \partial_{r_x} + (V'(r_x) -V'(r_{x-1})) \partial_{p_x} \right].
 \end{equation*}
All that can be proved under suitable assumptions on $U$ and $V$ (\cite{FFL}, \cite{BObook}). 

In order to explain what is meant by ergodicity of the infinite volume dynamics 
we need to introduce some notation. 
For any topological space $X$ equipped with its Borel $\sigma$-algebra we 
denote by ${\mc P} (X)$ the convex set of probability measures on $X$.  
The relative entropy $H(\nu|\mu)$ of $\nu \in {\mc P} (X)$ with respect to 
$\mu \in {\mc P} (X)$ is defined as
\begin{equation}
H(\nu | \mu) = \sup_{\phi} \left\{ \int \phi \, d\nu - 
\log \left( \int e^{\phi} \, d\mu \right) \right\},
\end{equation}
where the supremum is carried over all bounded measurable functions $\phi$ on $X$. 

Let $\theta_x, x \in \ZZ$, be the shift by $x$: $(\theta_x \omega)_z=\omega_{x+z}$. 
For any function $g$ on $\Omega$, $\theta_x g$ is the function such that $(\theta_xg)(\omega) 
= g(\theta_x \omega)$. For any probability measure $\mu \in {\mc P} (\Omega)$, 
$\theta_x \mu \in {\mc P} (\Omega)$ is the probability measure such that, 
for any bounded function $g: \Omega \to \RR$, 
it holds $\int_\Omega g \, d (\theta_x \mu)= \int_\Omega \theta_x g \, d\mu$. 
If $\theta_x \mu = \mu$ for any $x$ then $\mu$ is said to be translation invariant.

If $\Lambda $ is a finite subset of $\ZZ$ the marginal of $\mu \in {\mc P} (\Omega)$ 
on $\RR^{\Lambda}$ is denoted by $\mu |_{\Lambda}$. 
The relative entropy of $\nu \in {\mc P} (\Omega)$ with respect to $\mu \in {\mc P} (\Omega)$ 
in the box $\Lambda$ is defined by $H(\nu |_{\Lambda} \, | \, \mu |_{\Lambda} )$ 
and is denoted by $H_{\Lambda} (\nu| \mu)$. We say that a translation invariant probability 
measure $\nu \in {\mc P} (\Omega)$ has finite entropy density (with respect to $\mu$) 
if there exists a finite positive constant $C$ such that for any finite $\Lambda \subset \ZZ$, 
$H_{\Lambda} (\nu | \mu) \le C | \Lambda|$. In fact, if this condition is satisfied, then the limit 
\[
\overline{H} (\nu|\mu)=\lim_{|\Lambda| \to \infty} \frac{H_{\Lambda} (\nu | \mu) }{|\Lambda|}
\]
exists and is finite (see~\cite{FFL}). 
It is called the entropy density of $\nu$ with respect to $\mu$.  

We are now in position to define ergodicity.

\begin{definition}
\label{def:ergo}
We say that the infinite volume dynamics with infinitesimal generator ${\mc A}_{U,V}$ 
is \emph{ergodic} if the following claim is true:
If $\nu \in {\mc P} (\Omega)$ is a probability measure invariant by translation, 
invariant by the dynamics generated by ${\mc A}_{U,V}$ and with finite entropy density with respect to $\mu_{1,0,0}$, 
then $\nu$ is a mixture of the $\mu_{\beta,\lambda,\lambda'}, \beta>0, \lambda, \lambda' \in \RR$. 
\end{definition}

In the harmonic case ($U(z)=V(z)=z^2/2$) and for the Toda lattice ($U(z)=z^2/2$, $V(z) =e^{-z} +z -1$), the infinite system is completely integrable and an infinite number of conserved quantities can be explicitly written. It follows that they are not ergodic in the sense above. Nevertheless we expect that for a very large class of potentials, the Hamiltonian dynamics are ergodic and that these two cases are exceptional. 

In order that the infinite dynamics enjoy good ergodic properties, we superpose to the deterministic evolution a stochastic noise. 

Given a sequence $u= (u_y)_{y \in \ZZ} \in \RR^{\ZZ}$ and a site $x\in \ZZ$, we denote by $u^x$ (resp. $u^{x,x+1}$) the sequence defined by $(u^x)_y =u_y$ if $y \ne x$ and $(u^x)_x = -u_x$ (resp. $(u^{x,x+1})_y = u_y$ if $y \ne x,x+1$, $(u^{x,x+1})_{x}= u_{x+1}$ and $(u^{x,x+1})_{x+1} = u_x$). We consider the following noises (jump processes) whose generators are defined by their actions on functions $f: \Omega \to \RR$ according to:
\begin{enumerate}
\item $({\mc S}_{flip}^p f)(r,p)= \sum_{x} \left[ f(r,p^x) -f(r,p)\right]$.
\item  $({\mc S}_{flip}^r f)(r,p)= \sum_{x} \left[ f(r^x,p) -f(r,p)\right]$.
\item $({\mc S}_{ex}^p f)(r,p)= \sum_{x} \left[ f(r,p^{x,x+1}) -f(r,p)\right]$.
\item $({\mc S}_{ex}^r f)(r,p)= \sum_{x} \left[ f(r^{x,x+1},p) -f(r,p)\right]$.
\end{enumerate}

If $U$ is even then the noise ${\mc S}_{flip}^p$ conserves the energy, the deformation but not the momentum; if $U$ is odd the noise has little interest for us since the energy conservation is destroyed. Similarly, if $V$ is even the the noise ${\mc S}_{flip}^r$ conserves the energy and the momentum but not the deformation. The noises ${\mc S}_{ex}^p$ and ${\mc S}_{ex}^r$ conserve the energy, the deformation and the momentum. 

Let now $\gamma>0$ and denote by ${\mc L}$ the generator of the infinite Hamiltonian  dynamics generated by ${\mc A}_{U,V}$ perturbed by one of the previous noise ${\mc S}$ with intensity $\gamma$, i.e. ${\mc L}={\mc A}_{U,V} + \gamma {\mc S}$. 

\begin{theorem}[\cite{FFL}, \cite{BObook}, \cite{BS}]
The dynamics generated by ${\mc L}$ is ergodic in the sense that if $\nu \in {\mc P} (\Omega)$ is a probability measure invariant by translation, invariant by the dynamics generated by ${\mc L}$ and with finite entropy density with respect to $\mu_{1,0,0}$, then it holds:
\begin{enumerate}
\item If $U$ even and ${\mc S}={\mc S}_{flip}^{p}$ then $\nu$ is a mixture of the $\mu_{\beta,\lambda,0}$; 
\item If $V$ is even and ${\mc S}={\mc S}_{flip}^{r}$ then $\nu$ is a mixture of the $\mu_{\beta,0,\lambda'}$.
\item If ${\mc S}={\mc S}_{ex}^{p}$ or ${\mc S}={\mc S}_{ex}^{r}$ then $\nu$ is a mixture of the $\mu_{\beta,\lambda,\lambda'}$.  
\end{enumerate}
\end{theorem}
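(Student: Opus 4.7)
The plan is to exploit the decomposition ${\mc L}={\mc A}_{U,V}+\gamma{\mc S}$ relative to the reference measure $\mu:=\mu_{1,0,0}$: the Hamiltonian part ${\mc A}_{U,V}$ is antisymmetric in $\LL^2(\mu)$, the noise $\gamma{\mc S}$ is self-adjoint, and both separately preserve $\mu$. I would first use an entropy production argument to show that the Dirichlet form of $\nu$ associated with the noise vanishes, then deduce noise-invariance of $\nu$, and finally combine this with Hamiltonian invariance to identify $\nu$ as a mixture of those $\mu_{\beta,\lambda,\lambda'}$ whose chemical potentials are compatible with the conservation laws preserved by the noise.

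Concretely, I would fix a finite box $\Lambda$ and differentiate $H_\Lambda(\nu_t|\mu)$ at $t=0$, where $\nu_t$ is the law under ${\mc L}$. A standard convexity computation, together with the antisymmetry of ${\mc A}_{U,V}$ and the self-adjointness of ${\mc S}$ in $\LL^2(\mu)$, yields
$$\frac{d}{dt} H_\Lambda(\nu_t|\mu)\Big|_{t=0}\le -\gamma\, D_\Lambda^{\mc S}(\nu|\mu) + R_\Lambda(\nu|\mu),$$
with $D_\Lambda^{\mc S}$ the noise Dirichlet form on $\Lambda$ and $R_\Lambda$ a boundary term of order $|\partial\Lambda|$. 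The ${\mc L}$-invariance of $\nu$, combined with the finite entropy density hypothesis and translation invariance, forces $R_\Lambda/|\Lambda|\to 0$ and hence the vanishing of the per-site Dirichlet form $D_x^{\mc S}(\nu|\mu)=0$. Consequently $\nu$ is invariant under every elementary noise move (sign-flip $\omega\mapsto\omega^x$ in the ${\mc S}_{flip}$ cases, transposition $\omega\mapsto\omega^{x,x+1}$ in the ${\mc S}_{ex}$ cases), and by subtraction $\nu$ is also invariant under the pure Hamiltonian flow ${\mc A}_{U,V}$.

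At this stage the noise symmetry already selects the admissible chemical potentials of the putative Gibbs limit: a $p$-flip forces each conditional law of $p_x$ to be even, ruling out $\lambda'\ne 0$; an $r$-flip similarly rules out $\lambda\ne 0$; the exchange noises eliminate nothing. To close, I would combine the noise-invariance with the identity $\int{\mc A}_{U,V} f\,d\nu=0$, tested against well-chosen local functions $f$, to derive the DLR equations for the formal potential $\sum_x[U(p_x)+V(r_x)]$ with the allowed chemical potentials; translation invariance then identifies $\nu$ as a mixture of the corresponding $\mu_{\beta,\lambda,\lambda'}$.

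The hard part will be this last step. One must use Hamilton's equations $\{r_x,{\mc H}\}=U'(p_{x+1})-U'(p_x)$ and $\{p_x,{\mc H}\}=V'(r_x)-V'(r_{x-1})$ to propagate conditional-law information between the $r$ and $p$ degrees of freedom and to upgrade local symmetries into the product Gibbs form. This requires delicate approximation arguments for general smooth $U,V$, careful control of Hamiltonian fluxes at the boundary of local observables, and rigorous use of finite entropy density to obtain absolute continuity of $\nu$ on finite boxes so that integrations by parts are legitimate. This closing argument is the essential technical content of \cite{FFL,BObook,BS}.
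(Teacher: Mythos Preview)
The paper does not supply a proof of this theorem; it is stated as a citation of \cite{FFL}, \cite{BObook}, \cite{BS}, and the surrounding text only motivates why ergodicity is desirable (namely, to run Yau's relative entropy method and derive Euler equations). So there is no ``paper's own proof'' to compare against.

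That said, your outline is the standard one carried out in the cited references: antisymmetry of ${\mc A}_{U,V}$ and symmetry of ${\mc S}$ in $\LL^2(\mu_{1,0,0})$, an entropy-production inequality that kills the noise Dirichlet form per site (boundary terms being $o(|\Lambda|)$ by translation invariance and finite entropy density), hence invariance of $\nu$ under the elementary noise moves, and finally a closure argument combining ${\mc A}_{U,V}$-invariance with the noise symmetries to derive the product Gibbs structure. Your identification of which chemical potentials survive in each case is correct, and you are right that the last step---upgrading local symmetries plus ${\mc A}_{U,V}$-invariance to the full DLR characterization---is where the real work lies and is precisely what \cite{FFL}, \cite{BObook}, \cite{BS} supply. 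Your sketch is therefore faithful to the literature, even if the present paper itself offers nothing to check it against.
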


The main motivation to establish such a theorem is that by using Yau's relative entropy method (\cite{Y}) in the spirit of Olla-Varadhan-Yau (\cite{OVY}), it is possible to show that if the infinite volume dynamics is ergodic then the propagation of local equilibrium holds in the hyperbolic time scale, before the appearance of the shocks. As a consequence, the dynamics has a set of compressible Euler equations as hydrodynamic limits (\cite{BObook}, \cite{BS}). Observe that this is true also for the deterministic dynamics so that the rigorous derivation of the Euler equations from the first principles of the mechanics in the smooth regime is ``reduced'' to prove that the dynamics generated by ${\mc A}_{U,V}$ is ergodic.

\section{Harmonic chains }
\label{sec:harmonic}

\subsection{Role of the conservation of momentum and deformation}

We consider here the specific (harmonic) case $V(z)=U(z)=z^2/2$. The dynamics is then linear and can be solved analytically using Fourier transform. Let us introduce a new macroscopic variable $\eta \in \RR^{\ZZ}$ defined from $(p,r) \in \Omega$ by setting 
\begin{equation}
\label{eq:eta}
\eta_{2x}=r_x, \quad \eta_{2x+1}=p_{x+1}, \quad x\in \ZZ.
\end{equation}
Then, the Hamiltonian dynamics can be rewritten in the form
\begin{equation}
\label{eq:etadyn}
\frac{d\eta_x}{dt} = V'(\eta_{x+1}) -V' (\eta_{x-1}), \quad x \in \ZZ.
\end{equation}

We introduce the $k$th mode ${\widehat \eta} (k, \cdot)$ for $k \in {\bb T} = 
\mathbb{R}/\ZZ$, the one-dimensional torus of length $1$: 
\begin{equation*}
{\widehat \eta} (t,k) =\sum_{x \in \ZZ} \eta_x (t) \, e^{2 \ri \pi k x}. 
\end{equation*}
Then, the equations of motion are equivalent in the sense of distributions to the 
following decoupled system of first order differential equations:
\begin{equation*}
\frac{d{\widehat \eta}}{dt} (t,k)  = \ri \omega (k) \, {\widehat \eta} (t,k), 
\end{equation*}
where the dispersion relation $\omega (k)$ reads
\begin{equation*}
\omega (k) =- 2 \sin (2\pi k),
\end{equation*} 
and the group velocity $v_{\rm g}$ is
\begin{equation*}
v_{\rm g} (k)  = \omega' (k) = -4\pi \cos (2\pi k).
\end{equation*}
By inverting the Fourier transform, the solution can be written as
\[
\eta_{x} (t) = \int_{\TT} {\widehat \eta} (t,k) \, e^{- 2 \ri \pi k x} \, dk.
\]
If the initial configuration $\eta (0)$ is in $\ell_2$ the well defined energy of the $k$th mode
\[
E_k (t)= \frac{1}{4\pi} |{\widehat \eta} (t,k)|^2 = E_k (0)
\]
is conserved by the time evolution, and the total energy current 
${\tilde J}^e= \sum_{x \in \ZZ} j^{e}_{x,x+1}$ takes the simple form
\begin{equation*}
{\tilde J}^e = \int_{\TT} v_{\rm g}(k) E_k \, dk.
\end{equation*}

We interpret the waves ${\widehat \eta} (k,t)$ as fictitious particles (phonons in solid state physics). In the absence of nonlinearities, they travel the chain without scattering. The diffusion of energy is then said to be ballistic. If the potential is non-quadratic, it may be expected that the nonlinearities produce a scattering responsible for the diffusion of the energy. Nevertheless, the conservation of the deformation and of the momentum implies that $\sum_x (r_x +p_x)$ is conserved
\begin{equation}
\label{eq:consj}
{\widehat \eta} (t, 0) = {\widehat \eta} (0,0).
\end{equation}
The identity (\ref{eq:consj}) is valid even if $U \ne V$ and $U,V$ are not quadratic. It means that the $0$th mode is not scattered at all and crosses the chain ballistically. In fact, the modes with small wave number $k$ do not experience a strong scattering and they therefore contribute to the observed anomalous diffusion of energy.

It is usually explained that momentum conservation plays a major role in the anomalous diffusion of energy but it is clear that the deformation conservation plays exactly the same role as momentum and that it is the conservation of their sum which is the real ingredient producing anomalous diffusion of energy (see Theorem \ref{th:harmflip} and Theorem \ref{th:anharmflip}). 

\subsection{Green-Kubo formula}

The signature of an anomalous diffusion of energy can be seen at the level of the Green-Kubo formula. When transport of energy is normal, meaning that the macroscopic equations such as the Fourier's law or heat equation hold, the transport coefficient appearing in these equations can be expressed by the famous Green-Kubo formula. In order to define the latter we need to introduce some notations. Since the discussion about the Green-Kubo formula is not restricted to the harmonic case we go back to a generic anharmonic model in the rest of the Subsection.

Recall that the probability measures $\mu_{\beta, \lambda, \lambda'}$ form a family 
of invariant probability measures for the infinite dynamics generated by ${\mc A}_{U,V}$. The following thermodynamic relations (which are valid 
since we assumed that the partition function $Z$ is well defined on $(0,+\infty) \times {\mathbb R} \times {\mathbb R}$) relate the chemical potentials $\beta, \lambda, \lambda'$ 
to the mean energy $e$, the mean deformation $u$, the mean momentum $\pi$ under $\mu_{\beta,\lambda,\lambda'}$:
\begin{eqnarray}
\label{eq:tr}
&&e(\beta,\lambda, \lambda')= \mu_{\beta,\lambda,\lambda'} (U(p_x) +V(r_x))= -\partial_{\beta}\Big(\log Z(\beta,\lambda, \lambda')\Big),\\
&&u(\beta,\lambda,\lambda') =\mu_{\beta,\lambda,\lambda'} (r_x)= -\partial_{\lambda} \Big(\log Z(\beta,\lambda, \lambda')\Big), \\
&&\pi(\beta,\lambda,\lambda') =\mu_{\beta,\lambda,\lambda'} (p_x)= -\partial_{\lambda'} \Big(\log Z(\beta,\lambda,\lambda')\Big).
\end{eqnarray} 
These relations can be inverted by a Legendre transform to express $\beta$, $\lambda$ and $\lambda'$ 
as a function of~$e$, $u$ and~$\pi$. 
Define the thermodynamic entropy $S:(0,+\infty) \times \RR \times \RR \to [0,+\infty]$ as
\begin{equation*}
S(e,u,\pi)= \inf_{ \lambda, \lambda' \in \RR^2, \beta >0} \Big\{ \beta e + \lambda u + \lambda' \pi 
+ \log Z (\beta,\lambda, \lambda') \Big\}.
\end{equation*}
Let ${\mc U}$ be the convex domain of $(0,+\infty) \times \RR\times \RR$ where $S(e,u,\pi) <+ \infty$ and $\mathring{\mc U}$ its interior. Then, for any $(e,u,\pi):=(e(\beta, \lambda,\lambda'),u(\beta,\lambda, \lambda'), \pi(\beta,\lambda,\lambda')) 
\in {\mathring{\mc U}}$, the parameters $\beta,\lambda,\lambda'$ can be obtained as 
\begin{equation}
\label{eq:14}
\beta = (\partial_e S ) (e,u,\pi), 
\qquad 
\lambda= (\partial_u S) (e,u,\pi),
\qquad
\lambda'= (\partial_\pi S)(e,u,\pi)
\end{equation} 

These thermodynamic relations allow us to parameterize the Gibbs states by the average values of the conserved quantities $(e,u,\pi)$ rather than by the chemical potentials $(\beta,\lambda,\lambda')$. Thus, we denote by $\nu_{e,u,\pi}$ the Gibbs measure $\mu_{\beta,\lambda,\lambda'}$ where $(e,u,\pi)$ are related to $(\beta,\lambda,\lambda')$ by (\ref{eq:14}). Let $J^e:=J^e (e,u,\pi)= \nu_{e,u,\pi} (j^e_{x,x+1})$ be the average of the energy current $j^{e}_{x,x+1}= -U' (p_x) V'(r_x)$ and define the normalized energy current ${\hat j}^e_{x,x+1}$ by 
\begin{equation*}
{\hat j}^e_{x,x+1} = j^e_{x,x+1} - J^e  - (\partial_e J^e) (e_x -e) -(\partial_u J^e) (r_x -u) - (\partial_\pi J^e) (p_x -\pi).
\end{equation*}
The normalized energy current  is the part of the centered energy current which is orthogonal in ${\LL}^2 (\nu_{e,u,\pi})$ to the space spanned by  the conserved quantities.

Up to multiplicative thermodynamic parameters (see \cite{Sp} for details) that we neglect to simplify the notations, the Green-Kubo formula {\footnote{The transport coefficient is in fact a matrix whose size is the number of conserved quantities. Since we are interested in the energy diffusion, we only consider the entry corresponding to the energy-energy flux.}} is nothing but
\begin{equation*}
\kappa (e,u,\pi) := \int_0^{\infty} \; \sum_{x \in \ZZ} {\mathbb E}_{\nu_{e,u,\pi}} \left[ {\hat j}^e_{x,x+1} (\omega(t)) \; {\hat j}_{0,1}^e (\omega(0)) \right] \; dt
\end{equation*}
where ${\mathbb E}_{\nu_{e,u,\pi}}$ denotes the expectation corresponding to the law of the infinite volume dynamics $(\omega(t))_{t\ge 0}$ generated by ${\mc A}_{U,V}$ with initial condition $\omega (0)$ distributed according to the equilibrium Gibbs measure $\nu_{e,u,\pi}$.  The definition of $\kappa (e,u,\pi)$ is formal but the way we adopt to give it a mathematically well posed definition is to introduce a small parameter $z>0$ and define $\kappa (e,u,\pi)$ as
\begin{equation}
\label{eq:GK}
\kappa (e,u,\pi) = \limsup_{z \to 0} \ll {\hat j}^e_{0,1}\, , \, (z-{\mc A}_{U,V})^{-1} {\hat j}^e_{0,1} \gg_{e,u,\pi} 
\end{equation}
where the inner-product $\ll \cdot, \cdot \gg_{e,u,\pi}$ is defined for local square integrable functions $f,g : \Omega \to \RR$ by 
\begin{equation*}
\ll f \, , \, g \gg_{e,u,\pi} \; = \; \sum_{x \in \ZZ} \left[ \left( \int f \theta_x g \, d\nu_{e,u, \pi} \right) - \left(\int f d\nu_{e,u,\pi} \right) \left(\int g d\nu_{e,u,\pi} \right) \right].  
\end{equation*} 
Since $(z-{\mc A}_{U,V})^{-1} {\hat j}^e_{0,1}$ is not a local function, the term on the RHS of (\ref{eq:GK}) has to be interpreted in the Hilbert space obtained by the completion of the space of local bounded functions with respect to the inner product $\ll \cdot, \cdot \gg_{e,u,\pi}$. 

The superdiffusion (resp. normal diffusion) of energy corresponds to an infinite (resp. finite) value for $\kappa (e,u,\pi)$. In order to study the superdiffusion, it is of interest to estimate the time decay of the autocorrelation of the normalized current 
$$C (t):= C_{e,u,\pi} (t) = \sum_{x \in \ZZ}  {\mathbb E}_{\nu_{e,u,\pi}} \left[ {\hat j}^e_{x,x+1} (\omega(t)) \; {\hat j}_{0,1}^e (\omega(0)) \right].$$
It is in general easier to estimate the behavior of the Laplace transform $L(z)=\int_{0}^{\infty} e^{-z t} C(t) dt$ as $z\to 0$. Roughly, if  $L(z) \sim z^{- \delta}$ for some $\delta \ge 0$ then $C(t) \sim t^{\delta -1}$ as $t \to + \infty$. Observe also that  
\begin{equation*}
L(z)= \ll {\hat j}^e_{0,1}\, , \, (z-{\mc A}_{U,V})^{-1} {\hat j}_{0,1} \gg_{e,u,\pi}.
\end{equation*}

\subsection{Harmonic chain perturbed by a conservative stochastic noise}

We consider now the particular case $U(z)=V(z)=z^2 /2$ and study the Green-Kubo formula for the perturbed dynamics generated by ${\mc L} ={\mc A}_{U,V} +\gamma {\mc S}$ where ${\mc S}$ is one of the noises introduced in Section \ref{sec:ergodicity}. Since, depending of the form of the noise, the momentum conservation law (resp. deformation conservation law) can be suppressed, the corresponding Green-Kubo formula shall be modified by setting $\pi=0$ and $\partial_\pi J^e=0$ (resp. $u=0$ and $\partial_u J^e =0$).

We have the following theorem which shows that if momentum conservation law or deformation conservation law is destroyed by the noise then a normal behavior occurs.

\begin{theorem}[\cite{BO}]
\label{th:harmflip}
Let $U$ and $V$ be quadratic potentials. 
\begin{enumerate}
\item Consider the system generated by ${\mc L}={\mc A}_{U,V} + \gamma {\mc S}_{flip}^p$, $\gamma>0$. Then the following limit
\begin{equation*}
\lim_{z \to 0} \ll {\hat j}^e_{0,1}\, , \, (z-{\mc L})^{-1} {\hat j}^e_{0,1} \gg_{e,u, 0} 
\end{equation*}
exists, is finite and strictly positive and can be explicitly computed. 
 \item Consider the system generated by ${\mc L}={\mc A}_{U,V} + \gamma {\mc S}_{flip}^r$, $\gamma>0$. Then the following limit
\begin{equation*}
\lim_{z \to 0} \ll {\hat j}^e_{0,1}\, , \, (z-{\mc L})^{-1} {\hat j}^e_{0,1} \gg_{e,0,\pi} 
\end{equation*}
exists, is finite and strictly positive and can be explicitly computed. 
\end{enumerate}
\end{theorem}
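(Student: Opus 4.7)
The strategy I would follow rests on the fact that in the harmonic case $U=V=z^2/2$ the generator $\mathcal{A}_{U,V}$ preserves the linear subspace of quadratic functions of $(r,p)$, and each flip noise $\mathcal{S}_{flip}^{p}$, $\mathcal{S}_{flip}^{r}$ also maps this subspace into itself. The centered current in both cases reduces to $\hat{j}^e_{0,1} = -p_1 r_0$ (the projections onto the remaining conserved quantities vanish because $\pi=0$ in case 1 and $u=0$ in case 2, and the product structure of $\nu$ kills the energy and deformation/momentum terms by symmetry). So I would look for the solution $u$ of the resolvent equation $(z-\mathcal{L})u = \hat{j}^e_{0,1}$ under the translation-invariant quadratic ansatz
\begin{equation*}
u \; = \; \sum_{x,y\in\ZZ} \bigl[ A(x-y) r_x r_y + B(x-y) r_x p_y + C(x-y) p_x p_y \bigr],
\end{equation*}
which turns the problem into a closed linear system for the three kernels $A,B,C$.

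Next I would Fourier-transform in the translation variable: the whole equation decouples over $k\in\TT$ into a finite-dimensional linear system of the form
\begin{equation*}
\bigl(z\, I - M(k) - \gamma D(k)\bigr) \Phi(k) \; = \; \Psi(k),
\end{equation*}
where $M(k)$ encodes the Hamiltonian part (built from the dispersion $\omega(k)=-2\sin(2\pi k)$) and $D(k)$ is the dissipation coming from the chosen flip noise. The Green-Kubo inner product then unfolds, via Plancherel, into an integral on $\TT$ of the resolvent matrix element $\bigl\langle \Psi(k),\, (zI - M(k) - \gamma D(k))^{-1} \Psi(k)\bigr\rangle$, and the task becomes (i) taking $z\to 0$ pointwise in $k$ and (ii) justifying the limit by dominated convergence.

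The decisive point, where the role of conservation laws emerges, is the small-$k$ behavior of $D(k)$. For the exchange noises $\mathcal{S}_{ex}^{p}$, $\mathcal{S}_{ex}^{r}$, the noise acts by a discrete Laplacian in Fourier space and $D(k)$ vanishes like $1-\cos(2\pi k)$ at $k=0$, making the integrand blow up and producing the anomalous behavior. The flip noises, on the contrary, act independently at each site and their Fourier images are \emph{constant} damping rates: in case~1 a positive damping $\propto\gamma$ on every Fourier mode involving $p$, in case~2 the analogous statement with $r$. This breaks exactly one of the two conservation laws underlying \eqref{eq:consj} and guarantees that $-M(k)-\gamma D(k)$ is invertible uniformly in $k\in\TT$.

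The main obstacle is precisely this uniform invertibility \emph{in the direction of $\Psi(k)$}. I would verify it by checking that the component of the current along the ballistic eigenvector of $M(k)$ (the slow direction associated to the mode $\hat{\eta}(0)$) is genuinely damped by $D(k)$, i.e.\ does not lie in the kernel of the dissipative matrix uniformly in $k$. Once this is established, dominated convergence gives a finite, strictly positive limit, and the linear-algebra solution of the $(zI-M(k)-\gamma D(k))^{-1}$ at $z=0$ yields the explicit integral formula for $\kappa$ claimed in the theorem. The final step would be to identify the Hilbert-space completion in which this quadratic ansatz makes sense rigorously; Gaussianity of $\mu_{\beta,\lambda,\lambda'}$ ensures that quadratic functions form a closed subspace for $\ll\cdot,\cdot\gg_{e,u,\pi}$ invariant under $(z-\mathcal{L})^{-1}$, which legitimizes the whole reduction.
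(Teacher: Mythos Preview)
Your plan is sound and, in substance, coincides with the approach of \cite{BO} that the paper invokes: in the Gaussian setting, restricting to the quadratic sector is exactly the restriction to the second Hermite chaos mentioned in the proof of Theorem~\ref{th:anharmflip}, and the translation--Fourier diagonalization then reduces the resolvent equation to the finite-dimensional linear system you describe. The paper adds one shortcut you do not use: item~2 is obtained from item~1 by the duality $r_x\mapsto p_x$, $p_x\mapsto r_{x-1}$, which exchanges $\mathcal{S}_{flip}^p$ and $\mathcal{S}_{flip}^r$ and avoids redoing the computation.

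One point deserves a more careful statement than ``constant damping''. The flip noise $\mathcal{S}_{flip}^p$ damps the $rp$ and $pp$ Fourier components with a $k$-independent rate, but leaves the pure $rr$ component untouched; so $D(k)$ is only semidefinite, with a nontrivial kernel at every $k$. The reason the limit is nonetheless finite is a hypocoercivity mechanism: the Hamiltonian block $M(k)$ couples the undamped $rr$ direction to the damped $rp$ direction (through $\omega(k)$), and this coupling is nondegenerate except at $k=0$, where however $\Psi(k)$ itself lies in the damped $rp$ sector. Checking that the resulting $\langle\Psi(k),(zI-M(k)-\gamma D(k))^{-1}\Psi(k)\rangle$ stays bounded as $k\to 0$ is the genuine computation; once you write out the $3\times 3$ (or $2\times 2$ after symmetry reductions) system explicitly, this follows from an elementary determinant calculation, and the explicit integral formula for $\kappa$ drops out.
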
  

It shall be noticed that the second statement is a direct consequence of the first one since the process of the second item is equal to the first one by the transformation
\begin{equation*}
r_x \to p_x,\quad p_{x} \to r_{x-1}.
\end{equation*}
However, the interest of the second statement is to show that {\textit{even if momentum is conserved}}, a normal diffusion of energy occurs. This is because the deformation is no longer conserved. 

The following theorem shows that if the noise added conserves momentum {\textit{and}} deformation then the situation is very different since an anomalous diffusion of energy is observed. 

\begin{theorem}[\cite{BBO1},\cite{BBO2}]
Let $U$ and $V$ be quadratic potentials. 
\begin{enumerate}
\item Consider the system generated by ${\mc L}={\mc A}_{U,V} + \gamma {\mc S}_{ex}^p$, $\gamma>0$. Then the following limit
\begin{equation*}
\lim_{z \to 0} z^{1/2} \; \ll {\hat j}^e_{0,1}\, , \, (z-{\mc L})^{-1} {\hat j}^e_{0,1} \gg_{e,u, \pi} 
\end{equation*}
exists, is finite and strictly positive and can be explicitly computed. 
 \item Consider the system generated by ${\mc L}={\mc A}_{U,V} + \gamma {\mc S}_{ex}^r$, $\gamma>0$. Then the following limit
\begin{equation*}
\lim_{z \to 0} z^{1/2} \ll {\hat j}^e_{0,1}\, , \, (z-{\mc L})^{-1} {\hat j}^e_{0,1} \gg_{e,u,\pi} 
\end{equation*}
exists, is finite and strictly positive and can be explicitly computed. 
\end{enumerate}
In particular, in each of the previous case the Green-Kubo formula yields an infinite conductivity. 
\end{theorem}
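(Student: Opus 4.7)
The plan is to exploit the harmonic (hence Gaussian) structure to reduce the resolvent computation to an explicit spectral problem on the torus. First, the canonical transformation $\tilde r_x=p_{x+1},\;\tilde p_x=r_x$ conjugates the two dynamics, mapping $\mc S_{ex}^p$ into $\mc S_{ex}^r$ and preserving the Gibbs measures, so (2) follows from (1). I thus focus on $\mc L=\mc A_{U,V}+\gamma\mc S_{ex}^p$. Fix $(e,u,\pi)$; the invariant Gibbs measure $\nu=\nu_{e,u,\pi}$ is Gaussian, and $\bb L^2(\nu)$ decomposes into Wiener chaoses $\bigoplus_{n\ge 0}\mc H_n$. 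Both $\mc A_{U,V}$ (a derivation by linear functionals) and $\mc S_{ex}^p$ (a Gaussian-preserving exchange) preserve the chaos grading, and $\hat j_{0,1}^e=-p_1 r_0$ lies in $\mc H_2$, so the resolvent equation $(z-\mc L)u_z=\hat j_{0,1}^e$ may be restricted to the bilinear sector $\mc H_2$.

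Passing to the $\eta$ variables of \eqref{eq:eta} and Fourier transforming, I identify $\mc H_2$ with a space of symmetric kernels $\psi(k,k')$ on $\bb T^2$. On this space $\mc A_{U,V}$ acts as multiplication by the antisymmetric symbol $\ri[\omega(k)-\omega(k')]$ with $\omega(k)=-2\sin(2\pi k)$, while $\gamma\mc S_{ex}^p$ is a bounded integral operator whose diagonal scattering rate $R(k)$ vanishes quadratically at $k=0$ (reflecting momentum conservation under the noise). Decompose the test kernel into its diagonal part $\phi(k):=\psi(k,k)$ and an off-diagonal remainder; inverting $z-\ri[\omega(k)-\omega(k')]$ on the off-diagonal sector (with principal-value care near $k=k'$) and substituting back yields, schematically, a Schur-complement equation
\begin{equation*}
\bigl[z+\gamma R(k)+\gamma^2 K_z(k)\bigr]\,\phi_z(k)=g(k),
\end{equation*}
where $K_z$ is uniformly bounded and $g$ is smooth with $g(0)\ne 0$: the exact $k=0$ ballistic mode is removed by the normalization of the current, but small nonzero wavenumbers retain macroscopic overlap with it since $v_{\rm g}(0)=-4\pi\ne 0$.

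The Laplace transform $L_z=\ll \hat j_{0,1}^e,(z-\mc L)^{-1}\hat j_{0,1}^e\gg_{e,u,\pi}$ then reduces, modulo subleading off-diagonal contributions, to $\int_{\bb T}|g(k)|^2\bigl[z+\gamma R(k)+\gamma^2 K_z(k)\bigr]^{-1}\,dk$, and the rescaling $k=\sqrt{z/(\gamma c)}\,\xi$ (with $c=R''(0)/2>0$) gives
\begin{equation*}
\lim_{z\to 0} z^{1/2} L_z=\frac{\pi |g(0)|^2}{\sqrt{\gamma c}}\in(0,+\infty),
\end{equation*}
a constant that can be made explicit in terms of $(\beta,\lambda,\lambda')$.

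The main obstacle is the Schur elimination of the off-diagonal fluctuations uniformly in $z>0$: one must show that these produce only lower-order corrections to the $z^{-1/2}$ singularity despite the simultaneous vanishing of both symbols $\omega(k)-\omega(k')$ and the noise kernel near $k=k'=0$. This requires precise spectral bounds on the self-adjoint noise operator together with careful treatment of the Cauchy principal values coming from the oscillatory Hamiltonian symbol. Once the reduction is controlled, the $z^{-1/2}$ scaling is essentially dimensional analysis of the displayed integral.
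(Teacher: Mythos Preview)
The paper does not actually prove this theorem: it is stated with a citation to \cite{BBO1,BBO2} and no argument is given in the text. So there is no ``paper's own proof'' to compare against; what follows is an assessment of your proposal on its own merits and against the original references.

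Your overall architecture---symmetry reduction of (2) to (1), restriction to the second Wiener chaos via the Gaussian structure, Fourier diagonalisation of the Liouville part, and extraction of a $z^{-1/2}$ singularity from the low-wavenumber sector---is correct in spirit and is indeed how the problem is attacked in \cite{BBO2}. The symmetry argument is exactly the one the paper invokes for Theorem~\ref{th:harmflip}, so that part is fine.

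There is, however, a real oversimplification in your treatment of the noise. The exchange $\mc S_{ex}^p$ acts only on the $p$-variables; on $\mc H_2$ this forces you to distinguish the three sectors $rr$, $rp$, $pp$ (equivalently, in the $\eta$-picture, the even/odd parity of the two Fourier legs). On the $rr$ sector the noise vanishes identically, on $rp$ it is a multiplier in the $p$-leg, and on $pp$ it is a genuine integral (collision) operator, not a scalar rate $R(k)$. The Hamiltonian part couples these sectors. Your write-up collapses all of this into a single scalar symbol ``$z+\gamma R(k)+\gamma^2 K_z(k)$'', which hides precisely the mechanism that produces the anomalous exponent: the interplay between a transport symbol that does \emph{not} vanish at $k=0$ (since $v_{\rm g}(0)\neq 0$) and a dissipation that vanishes only on part of the phase space.

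Consequently, what you label ``the main obstacle'' (the Schur elimination of the off-diagonal sector uniformly in $z$) is not a technicality to be controlled after the fact; in \cite{BBO2} it \emph{is} the computation. One has to solve a coupled system for the resolvent kernel across the sectors, and the $z^{-1/2}$ emerges from an explicit integral over $\bb T$ of the form $\int_{\bb T}\frac{dk}{z+\gamma\,\Phi(k)}$ with $\Phi(k)\sim k^2$ only \emph{after} the sector coupling has been resolved exactly. Your sketch would become a proof if you (i) wrote the $3\times 3$ (or $2\times 2$ in the $\eta$-parity picture) block structure of $z-\mc L$ on $\mc H_2$ explicitly in Fourier, (ii) inverted it algebraically, and (iii) read off the small-$z$ asymptotics of the resulting one-dimensional integral; at that point no separate ``uniform control of off-diagonal fluctuations'' is needed, because the inversion is exact.
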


\section{ Anharmonic chains}

We consider now the anharmonic case. For deterministic chains generated by ${\mc A}_{U,V}$ we expect usually a superdiffusive behavior of the energy. If a noise ${\mc S}$ is superposed to the dynamics, we expect that transport is normal for ${\mc S}={\mc S}_{flip}^p$ and ${\mc S}={\mc S}_{flip}^r$ and superdiffusive if ${\mc S}={\mc S}_{ex}^p$ or ${\mc S}={\mc S}_{ex}^r$.

The following theorem generalizes Theorem \ref{th:harmflip} to the anharmonic case showing that a  noise destroying momentum conservation law or deformation conservation law produces normal transport. This shows that, also in the anharmonic case, momentum conservation alone is not responsible of anomalous diffusion of energy but that deformation conservation law plays a similar role.

\begin{theorem}[\cite{BO}]
\label{th:anharmflip}
Let $U$ and $V$ be smooth potentials such that there exists a constant $c>0$ such that
\begin{equation*}
c \le U '' \le c^{-1}, \quad c \le V'' \le c^{-1}.
\end{equation*} 

\begin{enumerate}
\item Assume $U$ even and consider the system generated by ${\mc L}={\mc A}_{U,V} + \gamma {\mc S}_{flip}^p$, $\gamma>0$. Then the following limit
\begin{equation*}
\lim_{z \to 0} \ll {\hat j}^e_{0,1}\, , \, (z-{\mc L})^{-1} {\hat j}^e_{0,1} \gg_{e,u,0} 
\end{equation*}
exists and is finite. 
 \item Assume $V$ even and consider the system generated by ${\mc L}={\mc A}_{U,V} + \gamma {\mc S}_{flip}^r$, $\gamma>0$. Then the following limit
\begin{equation*}
\lim_{z \to 0} \ll {\hat j}^e_{0,1}\, , \, (z-{\mc L})^{-1} {\hat j}^e_{0,1} \gg_{e,0,\pi} 
\end{equation*}
exists and is finite.
\end{enumerate}
\end{theorem}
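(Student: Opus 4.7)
The plan is to exhibit a uniform-in-$z$ upper bound on the resolvent form and to argue separately that the limit $z\downarrow 0$ exists by monotonicity. Existence is the easier part: rewriting the quantity in question as the Laplace transform $\int_0^\infty e^{-zt}\,C_{e,u,0}(t)\,dt$ of the normalized current autocorrelation of the stationary process, it is non-negative (once the current is centred against every conserved observable) and non-increasing in $z$, so a finite $\limsup$ forces a finite limit. The whole proof therefore reduces to a uniform upper bound.

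For that upper bound I would use the standard variational (fluctuation--dissipation) representation of $\ll\hat{j}^e_{0,1},(z-{\mc L})^{-1}\hat{j}^e_{0,1}\gg_{e,u,0}$ as an extremum over test functions $h$ involving the Dirichlet form of $-\gamma{\mc S}$ and the $H_{-1}(\gamma{\mc S})$-norm of ${\mc A}_{U,V}h$; it suffices to evaluate at one explicit $h$ and check both terms are bounded uniformly in $z$. The decisive algebraic observation is a parity identity for the current: when $U$ is even, $U'$ is odd, so $j^e_{0,1}=-U'(p_0)V'(r_0)$ is odd in the single variable $p_0$, and ${\mc S}_{flip}^p$ acts by the scalar $-2$ on any function that is odd in exactly one momentum. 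Setting
\[
h_0 \; := \; \frac{1}{2\gamma}\, U'(p_0)\, V'(r_0),
\]
one has the exact identity $\gamma\,{\mc S}_{flip}^p h_0 \; = \; j^e_{0,1}$, and after subtracting the $\LL^2(\nu_{e,u,0})$ projection onto the conserved observables (to which $\hat{j}^e_{0,1}$ is orthogonal by definition) this becomes $\gamma\,{\mc S}_{flip}^p h_0 = \hat{j}^e_{0,1}$ in $\LL^2$.

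Evaluating the variational functional at $h_0$ leaves a Hamiltonian residue ${\mc A}_{U,V}h_0$ whose $H_{-1}(\gamma{\mc S})$-norm must be bounded uniformly in $z$. I would split ${\mc A}_{U,V}h_0$ into components of definite parity in each $p_x$: the odd-in-$p$ pieces are themselves eigenfunctions of $\gamma{\mc S}_{flip}^p$ and feel the spectral gap directly, while the even-in-$p$ pieces force a second fluctuation--dissipation step, in which one constructs $h_1$ such that $\gamma\,{\mc S}_{flip}^p h_1$ matches them modulo conserved quantities, produces a new Hamiltonian residue ${\mc A}_{U,V}h_1$, and iterates. The hypothesis $c\le U'',V''\le c^{-1}$ enters precisely here: it yields, through a uniform Brascamp--Lieb / log-Sobolev inequality for the product Gibbs measures $\nu_{e,u,0}$, a contraction constant independent of $z$ at every step, so the iterates sum to a convergent series that gives the desired $z$-uniform bound.

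The principal obstacle is to close this iteration without the Fourier decoupling available for Theorem \ref{th:harmflip}: in real space one must track the spatial supports of the iterates $h_n$, certify that at every step a genuinely odd-in-some-$p_x$ component survives so that the noise can continue to dissipate, and use translation invariance and the equivalence of ensembles to control the sum. Once case (1) is established, case (2) follows at once from the involution $r_x\leftrightarrow p_x$ (combined with a lattice shift) noted after Theorem \ref{th:harmflip}, which sends $\mc{A}_{U,V}+\gamma\mc{S}_{flip}^p$ to $\mc{A}_{V,U}+\gamma\mc{S}_{flip}^r$ and exchanges the two evenness hypotheses.
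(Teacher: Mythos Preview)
Your reduction of part~(2) to part~(1) via the involution $r_x\leftrightarrow p_x$ matches the paper exactly, and your central algebraic observation---that $U$ even makes $j^e_{0,1}$ odd in a single momentum variable, so the flip noise has a spectral gap on it---is precisely the mechanism driving the result. Where you diverge from the paper is in how this observation is organised into a proof.

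The paper does not run a direct fluctuation--dissipation iteration in real space. Instead it invokes \cite{BO}, where the case $U(z)=z^2/2$ is treated by expanding in the tensor Hermite basis in the momentum variables; the only modification for general even $U$ is to replace Hermite polynomials by an orthogonal polynomial basis $\{P_n\}_{n\ge 0}$ for the one-site measure ${\mc N}^{-1}e^{-U(z)}\,dz$, chosen so that $P_n$ has parity $(-1)^n$. This graded basis diagonalises ${\mc S}_{flip}^p$ (each tensor monomial is an eigenvector with eigenvalue $-2\times(\text{number of odd-degree factors})$) and turns ${\mc A}_{U,V}$ into a tridiagonal-in-degree operator, so that the resolvent equation becomes an explicit recursion that can be closed and whose limit $z\to 0$ can be read off. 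The ``principal obstacle'' you name---keeping track of which iterates retain an odd-in-$p_x$ component and controlling their spatial supports---is exactly what this polynomial grading handles automatically, and it also delivers the \emph{existence} of the limit, not merely a uniform bound.

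That last point is where your argument has a genuine gap. Your monotonicity claim for $z\mapsto L(z)$ would be fine if ${\mc L}$ were symmetric, but here ${\mc L}={\mc A}_{U,V}+\gamma{\mc S}$ has an antisymmetric Hamiltonian part, and there is no reason for the current autocorrelation $C(t)$ to be nonnegative; so ``finite $\limsup\Rightarrow$ finite $\lim$'' is not justified as stated. In \cite{BO} the existence of the limit comes from showing that the solutions $u_z$ of $(z-{\mc L})u_z=\hat j^e_{0,1}$ converge in the $H_1$-norm associated with $-\gamma{\mc S}$, which is again made tractable by the polynomial-basis decomposition rather than by a soft monotonicity argument. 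Your Brascamp--Lieb contraction idea is plausible in spirit but, without the graded structure, you have not shown that the iteration actually contracts or even that the successive Hamiltonian residues stay in a space where the flip noise has a gap.
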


\begin{proof}
The second statement is a direct consequence of the first one by the symmetry argument evoked for Theorem \ref{th:harmflip}. The upper bounds on $U''$ and $V''$ are here to assure the existence of the infinite volume dynamics. 

For simplicity assume that $u=0$ and $\beta:=\beta(e,u,0)=1$. The first statement has been proved in \cite{BO} in the particular case $U(z)=z^2/2$. The generalization to a non quadratic smooth even potential $U$ is straightforward. In \cite{BO}, since $U(z)=z^2/2$, we used Hermite polynomials which are orthogonal w.r.t. the Gaussian measure $d\mu (z) = (2\pi)^{-1/2} \exp\{ -z^2 /2\} dz$. In the present case, the only difference is that we have to replace the Hermite basis by any orthogonal polynomial basis $\{P_n \}_{n \ge 0}$ with respect to the probability measure ${\mc N}^{-1} \exp (- U(z) ) dz$ (with ${\mc N}$ a normalization constant) which satisfies $P_n$ odd if $n$ odd and even otherwise. Then the proof is exactly the same.
\end{proof}

It would be now of interest to show that if we perturb the dynamics generated by ${\mc A}_{U,V}$ by ${\mc S}_{ex}^p$ or by ${\mc S}_{ex}^r$ then anomalous diffusion of energy occurs {\footnote{However, if $U$ or $V$ is bounded, like for the rotors model, we expect that diffusion is normal. }}. This is an open question and as far as we know the only result going in this direction has been obtained in \cite{BG}.   

The model considered in \cite{BG} is the dynamics generated by ${\mc A}_{U,V}$ with $U=V$ taking the particular form $V(z)=e^{-z} +z -1$, perturbed by a noise ${\mc S}$ which conserves energy and $\sum_{x \in \ZZ} (r_x + p_x)$. More exactly, let us rewrite the Hamiltonian dynamics (\ref{eq:Ham-gen}) by using the variable $\eta:=(\eta_x)_{x \in \ZZ} \in \RR^{\ZZ}$ defined by (\ref{eq:eta}). Then we get the equations of motion given by (\ref{eq:etadyn}). With these new variables, the total energy is $2 \sum_{x} V(\eta_x)$, the total deformation is $\sum_{x} \eta_{2x}$ and the total momentum is $\sum_{x} \eta_{2x+1}$. The noise ${\mc S}$ superposed to the dynamics acts on local functions $f: \RR^{\ZZ} \to \RR$ according to 
\begin{equation*}
({\mc S} f) (\eta) = \sum_{x \in \ZZ} \left[ f(\eta^{x,x+1}) -f(\eta) \right].
\end{equation*}  

Observe that the noise conserves the energy, destroys the momentum and the deformation conservation laws but conserves $\sum_{x} \eta_x = \sum_{x} (p_x +r_x)$, which as explained above is the quantity (that we call the ``volume'' to follow the terminology used in \cite{BG}) responsible of the anomalous diffusion of energy. Since we have now only two conserved quantities (the energy and the volume), the Gibbs states of the perturbed dynamics are given by $\{ \mu_{\beta, \lambda, \lambda} \}_{\beta>0, \lambda}$ or equivalently by $\{ \nu_{e, \pi, \pi} \, ; \, e>0, \pi \}$. The normalized energy current is given by
\begin{equation*}
{\hat j}^e_{x,x+1} (\eta)= -2 V' (\eta_x) V' (\eta_{x+1}) +2 \tau^2 +2 \partial_e (\tau^2) \, (2V(\eta_x) -e) + 2 \partial_\pi (\tau^2) \, (\eta_x -\pi)
\end{equation*}
with $\tau:=\tau(e,\pi) = \int V'(\eta_x) d\nu_{e,\pi,\pi}$. 

\begin{theorem}[\cite{BG}]
Let $(e,\pi) \in (0,+\infty) \times \RR$ such that $\nu_{e,\pi,\pi}$ is well defined. Consider the dynamics with generator ${\mc L}={\mc A}_{exp} +\gamma {\mc S}$, $\gamma>0$, where 
\begin{equation}
{\mc A}_{exp} = \sum_x (V'(\eta_{x+1}) -V' (\eta_{x-1})) \partial_{\eta_x},
\end{equation}
and $V(z)=e^{-z}+z-1$. Then there exists a constant $c>0$ such that for any $z>0$ 
\begin{equation*}
c z^{-1/4} \; \le \; \ll {\hat j}^e_{0,1} , (z-{\mc L})^{-1} {\hat j}^e_{0,1} \gg_{e,\pi,\pi} \le c^{-1} z^{-1/2}.
\end{equation*} 
It follows that the Green-Kubo formula of the energy transport coefficient yields an infinite value.
\end{theorem}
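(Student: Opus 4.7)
The plan is to use the variational characterisation of the resolvent bilinear form. Since $\mc{L} = \mc{A}_{exp} + \gamma\mc{S}$ splits into an antisymmetric and a symmetric non-positive part on $\LL^2(\nu_{e,\pi,\pi})$, and $\hat j^e_{0,1}$ is orthogonal to the conserved quantities (energy and volume), the Sethuraman--Varadhan--Yau identity
\[
\ll \hat j^e_{0,1}, (z-\mc{L})^{-1}\hat j^e_{0,1}\gg = \sup_g\Big\{2\ll \hat j^e_{0,1},g\gg - \ll g,(z-\gamma\mc{S})g\gg - \ll\mc{A}_{exp}g,(z-\gamma\mc{S})^{-1}\mc{A}_{exp}g\gg\Big\}
\]
applies, the supremum being over local square-integrable test functions. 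I would deduce both bounds from this identity.

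For the upper bound, I would drop the non-negative last term, which gives
\[
\ll \hat j^e_{0,1}, (z-\mc{L})^{-1}\hat j^e_{0,1}\gg \;\le\; \ll \hat j^e_{0,1},(z-\gamma\mc{S})^{-1}\hat j^e_{0,1}\gg,
\]
and then compute the right-hand side. Under the product Gibbs state, $\gamma\mc{S}$ acts on the i.i.d.\ $\eta$-variables as a symmetric nearest-neighbour exchange, dual to a pair of independent one-dimensional continuous-time simple random walks. Since $\hat j^e_{0,1}$ is a bounded local function of $\eta_0,\eta_1$, its autocorrelation under $\gamma\mc{S}$ is controlled by the probability that two independent random walks occupy the bond $\{0,1\}$ at time $t$; the local central limit theorem gives a $t^{-1/2}$ decay, and orthogonality of $\hat j^e_{0,1}$ to the one-body conserved modes kills any non-decaying component. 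Integrating against $e^{-zt}$ produces the $O(z^{-1/2})$ bound.

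For the lower bound, I would insert a carefully chosen trial function $g$ and optimise over its overall scalar amplitude, turning the variational formula into the Dirichlet-type quotient
\[
\ll \hat j^e_{0,1}, (z-\mc{L})^{-1}\hat j^e_{0,1}\gg \;\ge\; \frac{\ll \hat j^e_{0,1},g\gg^{2}}{\ll g,(z-\gamma\mc{S})g\gg + \ll\mc{A}_{exp}g,(z-\gamma\mc{S})^{-1}\mc{A}_{exp}g\gg}.
\]
A natural candidate is a mesoscopic average of the slow volume field, say $g_N=\sum_x\varphi(x/N)(\eta_x-\pi)$ with $\varphi$ a fixed smooth profile on $\RR$. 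Using the asymmetry of $V(z)=e^{-z}+z-1$, one checks that the overlap $\ll \hat j^e_{0,1},g_N\gg$ is non-zero and grows linearly in $N$; a direct Gaussian-type computation shows that $\ll g_N,g_N\gg \sim N^2$ and that the Dirichlet form $\ll g_N,-\gamma\mc{S} g_N\gg$ scales as $N^{-1}$ (gradient-squared sum with smooth profile); and the last term $\ll\mc{A}_{exp}g_N,(z-\gamma\mc{S})^{-1}\mc{A}_{exp}g_N\gg$ is estimated by performing a summation by parts on $\mc{A}_{exp} g_N$ and then applying the upper-bound step of the previous paragraph to the resulting local summands. Balancing the three denominator contributions by tuning $N$ as an appropriate negative fractional power of $z$ should give the announced lower bound of order $z^{-1/4}$.

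The main obstacle is the last denominator estimate: the operator $\gamma\mc{S}$ preserves every sum of the form $\sum_x f(\eta_x)$, so the resolvent $(z-\gamma\mc{S})^{-1}$ is degenerate on the span of the fields $\{V'(\eta_x)-\tau\}_{x\in\ZZ}$, which is precisely where the components of $\mc{A}_{exp}g_N$ land after summation by parts. Extracting the correct $z$-exponent requires projecting $\mc{A}_{exp}g_N$ onto the orthogonal complement and exploiting the subleading Fourier structure of the resulting gradient-type functional. A secondary difficulty is to exhibit a trial function with non-vanishing overlap with $\hat j^e_{0,1}$ despite the orthogonality to the conserved quantities, for which the asymmetry of the exponential potential is crucial. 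The gap between the $1/4$ lower exponent and the $1/2$ upper exponent reflects the intrinsic limitation of this linear-response trial function; closing it would require a non-linear trial function tracking the ballistic transport of the volume field by $\mc{A}_{exp}$, which remains a major open problem.
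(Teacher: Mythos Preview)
The paper does not prove this theorem but only quotes the result from \cite{BG}; nevertheless your strategy can be assessed on its own merits. The upper bound is fine: discarding the antisymmetric term in the variational formula and reducing to the resolvent of the stirring generator $\gamma{\mc S}$ is exactly the standard route, and the duality with two labelled walkers gives the $z^{-1/2}$ behaviour for a centred degree-two local function.

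The lower bound, however, collapses at the very first step. Your trial function $g_N=\sum_y\varphi(y/N)(\eta_y-\pi)$ lies entirely in the span of the \emph{volume} field, and the normalized current $\hat j^e_{0,1}$ is, by its very definition, orthogonal to the conserved quantities in the inner product $\ll\cdot,\cdot\gg$. Concretely,
\[
\ll \hat j^e_{0,1},\, g_N\gg
=\Big(\sum_{w}\varphi(w/N)\Big)\,\sum_{x}\mathrm{Cov}_{\nu}\big(\hat j^e_{0,1},\eta_x-\pi\big)=0,
\]
because the subtraction of $2\partial_\pi(\tau^2)(\eta_0-\pi)$ in $\hat j^e_{0,1}$ was chosen precisely to kill the last sum. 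The asymmetry of the exponential potential does not help here: the vanishing is structural and holds for any $V$. Consequently the Dirichlet quotient you write down has a zero numerator and gives no lower bound at all. To obtain a non-trivial overlap one must go to trial functions of degree two in the $\eta$'s (the current $-2V'(\eta_0)V'(\eta_1)$ is itself of degree two), for instance smoothed sums of products $\xi_x\xi_{x+1}$ with $\xi_x$ a suitably centred function of $\eta_x$; this is what is done in \cite{BG}, and the bookkeeping of the three terms in the variational formula for such quadratic test functions --- in particular the control of ${\mc A}_{exp}g$ through the degenerate resolvent $(z-\gamma{\mc S})^{-1}$ --- is where the exponent $1/4$ actually comes from. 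Your ``main obstacle'' paragraph is thus aimed at the wrong difficulty: the degeneracy of $(z-\gamma{\mc S})^{-1}$ on one-body fields is irrelevant once one works with the correct degree-two test functions, whereas the real work lies in estimating the action of ${\mc A}_{exp}$ on those.
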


We expect that the system above belongs to the KPZ class so that $ \ll {\hat j}^e_{0,1} , (z-{\mc L})^{-1} {\hat j}^e_{0,1} \gg_{e,\pi,\pi}$ should diverge like $z^{-1/3}$. In the present state of the art no robust technique is available to show such result apart from the non-rigorous (but powerful) mode-coupling theory (\cite{MS}, \cite{Sp2}, \cite{VB}). A second open problem is to generalize the previous theorem to other interaction potentials $V$. Numerical simulations have been reported in \cite{BS}.

%
%
%
%
%


\end{document}